\newcommand{\RealSet}{\mathbb{R}}
\newcommand{\diag}[1]{\operatorname{diag}\left\{#1\right\}}		% Diagonal concatenation
\newcommand{\ver}[1]{\operatorname{ver}\left\{#1\right\}}		% Vertical concatenation
\newcommand{\supp}[1]{\operatorname{supp}\left\{#1\right\}}		% Support set
\newcommand{\suppeps}[2]{\operatorname{supp}_{#1}\left\{#2\right\}}		% Support set
\DeclareMathOperator*{\argmin}{arg\,min}						% argmin
\newtheorem{proposition}{Proposition}
\newtheorem{theorem}{Theorem}
\newtheorem{assumption}{Assumption}
\newtheorem{remark}{Remark}
\newcommand{\q}{\mathbf{q}}
\newcommand{\p}{\mathbf{p}}
\newcommand{\h}{\mathbf{h}}
\newcommand{\x}{\mathbf{x}}
\newcommand{\de}{\text{d}}
\acrodef{ADF}{\emph{agent descriptor function}}
\acrodef{CTDF}{\emph{current task descriptor function}}
\acrodef{TDF}{\emph{desired task descriptor function}}
\acrodef{TEF}{\emph{task error function}}
\title{\LARGE \bf
	Persistent Coverage Control for Teams of Heterogeneous Agents*\\ -- Extended Version --
}
\author{
	Alberto Mellone$^{1,2}$, Giovanni Franzini$^{1,3}$, Lorenzo Pollini$^{1}$ and Mario Innocenti$^{1}$%
	\thanks{
		*This paper is an extended version of: ``A. Mellone, G. Franzini, L. Pollini and M. Innocenti, Persistent coverage control for teams of heterogeneous agents, to appear in \textit{Proc. 57th IEEE Conference on Decision and Control}, Miami Beach, FL, USA, Dec. 2018''.
	}%
	\thanks{
		*This work was partially supported by the University of Pisa PRA-2017-15 and PRA-2017-41 research grants.
	}%
	\thanks{$^{1}$University of Pisa, Department of Information Engineering, Largo L. Lazzarino~1, 56122 Pisa, Italy.
	}%
	\thanks{$^{2}$A. Mellone is currently with Imperial College London, Department of Electrical and Electronic Engineering, SW7 2AZ, London, UK. The work presented herein was carried out when he was Graduate Student at University of Pisa.
	}%
	\thanks{$^{3}$G. Franzini is currently with United Technologies Research Centre Ireland, 4th Floor, Penrose Business Center, Penrose Wharf, T23 XN53, Cork, Republic of Ireland. The work presented herein was carried out when he was Ph.D. Candidate at University of Pisa.
	}%
	\thanks{E-mail addresses: {\itshape\small a.mellone18@imperial.ac.uk} (A. Mellone), {\itshape\small franzigi@utrc.utc.com} (G. Franzini), {\itshape\small lorenzo.pollini@unipi.it} (L. Pollini), {\itshape\small mario.innocenti@unipi.it} (M. Innocenti).}%
}
\begin{document}

\maketitle
\thispagestyle{empty}
\pagestyle{empty}

%%%%%%%%%%%%%%%%%%%%%%%%%%%%%%%%%%%%%%%%%%%%%%%%%%%%%%%%%%%%%%%%%%%%%%%%%%%%%%%%
\begin{abstract}

A distributed cooperative control law for persistent coverage tasks is proposed, capable of coordinating a team of heterogeneous agents in a structured environment. Team heterogeneity is considered both at vehicles' dynamics and at coverage capabilities levels. More specifically, the general dynamics of nonholonomic vehicles are considered. Agent heterogeneous sensing capabilities are addressed by means of the descriptor function framework, a set of analytical tools for controlling agents involved in generic coverage tasks. By means of formal arguments, we prove that the team performs the task and no collision occurs between agents nor with obstacles. A numerical simulation validates the proposed strategy.

\end{abstract}

%%%%%%%%%%%%%%%%%%%%%%%%%%%%%%%%%%%%%%%%%%%%%%%%%%%%%%%%%%%%%%%%%%%%%%%%%%%%%%%%
\section{Introduction}

Distributed robotic systems are currently one of the most extensively studied topics worldwide. Besides the theoretical implications, the practical advantages they yield make them particularly attractive. Indeed, a team of autonomous agents, each one characterized by a reduced computational and implementation complexity, can enforce local, basic actions to pursue a global desired behavior in a cost-efficient way. Coverage problems are one of the branches of multi-agent systems study, featuring autonomous vehicles exploring the environment to gather data through on-board sensors or to provide resources to different regions. Applications are extremely differentiated, from surveillance to geographical surveys, lawn mowing, floor cleaning, and environmental monitoring~\cite{Sahin2007,Dunbabin2012,Avellar2015}.

The coverage type considered in this paper is known as \emph{persistent coverage} and consists in agents recursively exploring the environment, thus modeling scenarios in which information that has been collected in the past should be updated over time. Alternately, resources previously delivered are supposed to undergo a degradation process and, in turn, require the team to sweep again already visited spots.

In the literature, different types of techniques have been used to address this problem. Solutions based on waypoint guidance have been proposed for agents with single-integrator~\cite{Hokayem2007,Franco2013} and unicycle kinematics~\cite{Franco2015}. Path planning techniques are proposed in~\cite{Song2013,PalaciosGasos2017}. In the former reference, the authors discuss the geometric design of paths aimed at performing the task, that the agents must follow, whereas in~\cite{PalaciosGasos2017} the single-integrator kinematics of the agents is explicitly taken into account in the planning. A gradient based law is instead proposed in~\cite{Wang2010} for a team of single-integrator agents. Event-driven strategies for single-integrator agents are proposed in~\cite{Zhou2016}. The majority of the proposed solutions consider teams composed by agents with the same kinematics, and with sensing capabilities characterized by the same mathematical model, although the parameters may vary from agent to agent. The control of a team of heterogeneous agents seems to be still un-addressed in the literature, at the best of the authors' knowledge.

The main contribution of the paper is a distributed cooperative control law for persistent coverage tasks, designed for coordinating
the motion of a team composed by agents with general sensing capabilities and different dynamics, extending the results presented in~\cite{Franzini2017}. 

More specifically, we consider agents with general nonholonomic dynamics, thus providing a control law that better complies with a rather wide class of vehicles. Sensing heterogeneity is addressed by means of the analytical tools offered by the \emph{descriptor function framework}. Introduced in~\cite{Niccolini2010}, the framework addresses the need for a unified method to approach different kinds of coverage tasks. It provides a mathematical abstraction, the \emph{descriptor function}, which models both the task requirements and the agents' contributions. This framework allows for an abstraction from the actual coverage capabilities of each agent, as well as from the actual scenario. A control law is then designed to make the agents cooperatively carry out the assigned task by solely resorting to information retrieved through local sensors or inter-agent communication. In addition, a unified approach to avoid collisions between the agents and with the obstacles is provided, ensuring the safe execution of the task. The validity and efficacy of the proposed technique is assessed by means of formal arguments and is shown through a simulation.

\paragraph*{Notation}

The field of reals is denoted with $\RealSet$. The following subsets of $\RealSet$ are introduced: $\RealSet_0^+ = \left\{x\in \RealSet \colon x\ge 0 \right\}$, and $\RealSet^+ = \RealSet_0^+ \backslash \{0\}$. The $n \times n$ identity matrix is denoted with $\mathbf{I}_n$. Given a positive semi-definite matrix $\mathbf{A} \in \RealSet^{n \times n}$, the weighted Euclidean norm of $\mathbf{v} \in \RealSet^n$ is denoted with $\Vert \mathbf{v} \Vert_{\mathbf{A}} = \sqrt{\mathbf{v}^T\mathbf{A}\mathbf{v}}$. The support operator $\supp{\cdot }$ of a real- or vector-valued function $\mathbf{f}(\x)$ is defined as
$\supp{\mathbf{f}} = \left\{ \x \colon \mathbf{f}(\x) \ne \mathbf{0} \right\}$, i.e. it is the set of points of its domain where $\mathbf{f}$ is not identically zero. The operators $\ver{\cdot}$ and $\diag{\cdot}$ represent the vertical and block diagonal concatenations.

\section{Team Modeling}

\subsection{Preliminaries}

Consider a \emph{team} $\mathcal{T}$ composed by $N$ agents, allowed to operate in a closed and bounded topological set $Q \subset \RealSet^n$, with $n \in \{2,3\}$. We denote with $\p_i \in \mathcal{C}(Q)$ the $i$-th agent \emph{pose}, i.e. its position and orientation in $Q$, where $\mathcal{C}(Q)$ represents the agent configuration space on $Q$. 

The operational area may be populated by \emph{obstacles}. Without loss of generality, obstacles will be considered convex polygons ($n = 2$) or polyhedra ($n = 3$). The $k$-th obstacle occupies the region $Q_{\text{obs}}^k \subseteq Q$. Given the obstacle set $\mathcal{O}$, we define $Q_{\text{obs}}=\bigcup_{k \in \mathcal{O}} Q_{\text{obs}}^k$.

\subsection{Descriptor Function Framework Overview}

A brief overview of the descriptor function framework is now provided, reviewing the elements essential for the paper. The interested reader shall refer to~\cite{Niccolini2011} for further details. The main idea behind the framework is the use of a common abstraction, the \emph{descriptor function}, for modeling both the agent capabilities and the deployment requested to the team in order to accomplish the task.

The \ac{ADF} $d_i \colon \mathcal{C}(Q)\times Q \to \RealSet_0^+$ represents the agent $i\in\mathcal{T}$ capability of performing the assigned task or, equivalently for a coverage task, the amount of sensing that it instantly provides. The \ac{ADF} is assumed to be continuous and differentiable over $Q$, its support being a connected set. For \ac{ADF}s with unbounded support it is reasonable to define:
\begin{equation}
	\suppeps{\epsilon}{ d_i(\p_i, \cdot) } 
	= 
	\left\{
		\q\in Q \colon d_i(\p_i, \q) > \epsilon ,\, \epsilon \in \RealSet_0^+
	\right\}
\end{equation}     
In the remainder, the distinction between spatially bounded and unbounded \ac{ADF}s will not be explicitly addressed, and, with a slight abuse of notation, $\supp{ \cdot}$ will also denote $\suppeps{\epsilon}{\cdot}$ in the latter case.

\begin{remark}
	\label{remark:support_inclusion}
	Given the positive definiteness, continuity and differentiability of the \ac{ADF}s, the following result holds:
	\begin{equation}
		\supp{\frac{\partial d_i}{\partial \p_i} } \subseteq \supp{d_i}
	\end{equation}
\end{remark}

The sum of all the \ac{ADF}s is called the \ac{CTDF} and represents the cumulative amount of sensing that the team is instantly achieving. It is denoted with $d \colon \mathcal{C}(Q)^N\times Q \to \RealSet_0^+$, and is defined as:
\begin{equation}
\label{eq:cTDF}
	d(\p, \q) = \sum_{i \in \mathcal{T}} d_i(\p_i, \q),
	\quad
	\p = \ver{\p_i} \in \mathcal{C}(\RealSet^n)^N
\end{equation} 

The \ac{TDF} $d_* \colon \RealSet_0^+ \times Q \to \RealSet_0^+$ describes how the agents should be distributed in the operational area in order to maximize the goal achievement. Therefore, $d_*(t,\q)$ defines how much of the available sensing capability is needed at time $t$ at point $\q$. With the definition of the \ac{CTDF} and of the \ac{TDF}, it is natural to introduce an error between the amount of sensing that the task requires and that is actually provided by the team at each $\q \in Q$. This error is quantified by the \ac{TEF}, defined as:
\begin{equation}
\label{eq:TEF}
	e(t, \p, \q) = d_*(t, \q) - d(\p,\q)
\end{equation}
The \ac{TEF} models the excess or the lack of sensing over time at each point of the environment.

\subsection{Agents Dynamics}
\label{sec:agents_motion}

We consider the following dynamics for the $i$-th agent:
\begin{subequations}
\label{eq:general_dynamics}
\noeqref{eq:dynamic_constrained,eq:kinematic_constraints}
	\begin{empheq}[left=\empheqlbrace\,]{align}
	\label{eq:dynamic_constrained}
	&\mathbf{B}_i(\x_i)\ddot{\x}_i 
	+ 
	\mathbf{c}_i(\x_i, \dot{\x}_i) 
	= 
	\mathbf{D}_i(\x_i)\boldsymbol{\tau}_i 
	+ 
	\mathbf{A}_i(\x_i)\boldsymbol{\lambda}_i
	\\
	\label{eq:kinematic_constraints}
	&\mathbf{A}_i(\x_i)^T\dot{\x}_i 
	= 
	\mathbf{0} 
	\end{empheq}
\end{subequations}
where $\x_i \in \RealSet^{q_i}$ is the state vector, $\mathbf{B}_i \colon \RealSet^{q_i} \to \RealSet^{q_i\times q_i}$ is the inertia matrix, $\mathbf{c}_i \colon \RealSet^{q_i}\times\RealSet^{q_i} \to \RealSet^{q_i}$ is the vector gathering the centripetal, Coriolis and potential terms, $\mathbf{D}_i \colon \RealSet^{q_i} \to \RealSet^{q_i\times m_i}$ transforms the inputs $\boldsymbol{\tau}_i\in \RealSet^{m_i}$ to generalized forces, $\mathbf{A}_i \colon \RealSet^{q_i} \to \RealSet^{q_i \times q_i-m_i}$ is the matrix associated to the Pfaffian representation of the $q_i - m_i$ kinematic constraints and $\boldsymbol{\lambda}_i\in \RealSet^{q_i-m_i}$ is the vector of Lagrange multipliers~\cite{Siciliano2009}.

The agent's pose vector is extracted from the state using a continuous and differentiable map $\h_i:\RealSet^{q_i} \rightarrow \mathcal{C}(\RealSet^n)$:
\begin{equation}
\label{eq:agent_pose}
	\p_i = \h_i(\x_i)
\end{equation}
In the rest of the paper we will consider the following equivalent representation of the dynamics in~\eqref{eq:general_dynamics}:
\begin{equation}
\label{eq:agent_dynamics}
	\dot{\mathbf{z}}_i 
	= 
	\begin{bmatrix}
		\dot{\x}_i \\ \dot{\mathbf{v}}_i
	\end{bmatrix} 
	= 
	\begin{bmatrix}
		\mathbf{G}_i(\x_i)\mathbf{v}_i \\ \mathbf{0}
	\end{bmatrix}
	+
	\begin{bmatrix}
		\mathbf{0} \\ \mathbf{I}_m
	\end{bmatrix}
	\mathbf{u}_i
\end{equation}
where $\mathbf{v}_i \in \RealSet^{m_i}$ is known as \emph{pseudo-velocity vector}, $\mathbf{u}_i \in \RealSet^{m_i}$ is the new system input, known as \emph{pseudo-acceleration vector}, and $\mathbf{G}_i: \RealSet^{q_i} \rightarrow \RealSet^{q_i\times m_i}$ has columns that span the null space of $\mathbf{A}_i$, i.e. $\mathbf{A}_i^T\mathbf{G}_i = \mathbf{0}$. 

\begin{proposition}[\cite{Siciliano2009}, Chap. 11.4]
	Assume that $\x_i$ and $\mathbf{v}_i$ are measurable, and that $\mathbf{G}_i(\x_i)^T\mathbf{D}_i(\x_i)$ is invertible for all $\x_i$. Then the dynamics~\eqref{eq:agent_dynamics} is equivalent to~\eqref{eq:general_dynamics} under the nonlinear state-feedback law:
	\begin{equation}
	\label{eq:linearizing_feedback}
		\boldsymbol{\tau}_i 
		=  
		\left( \mathbf{G}_i(\x_i)^T\mathbf{D}_i(\x_i) \right)^{-1} 
		\!
		\left(
			\widetilde{\mathbf{B}}_i(\x_i)\mathbf{u}_i 
			+ 
			\widetilde{\mathbf{c}}_i(\x_i, \mathbf{v}_i)
		\right)
	\end{equation}
	where
	\begin{align}
	\label{eq:c_tilde}
	\!\!\!\!\!
		\widetilde{\mathbf{c}}_i(\x_i,\!\mathbf{v}_i) 
		\!
		&=
		\!
		\mathbf{G}_i(\x_i)^T
		\!\!
		\left(\!
			\mathbf{c}_i(\x_i, \! \mathbf{G}_i(\x_i)\mathbf{v}_i) 
			\!
			+
			\! \!
			\mathbf{B}_i(\x_i)\dot{\mathbf{G}}_i(\x_i)\mathbf{v}_i
		\!\right)
		\\
	\label{eq:B_tilde}
		\widetilde{\mathbf{B}}_i(\x_i) 
		&= 
		\mathbf{G}_i(\x_i)^T\mathbf{B}_i(\x_i)\mathbf{G}_i(\x_i)
	\end{align}
\end{proposition}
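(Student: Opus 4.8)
The plan is to prove that, under the feedback \eqref{eq:linearizing_feedback}, the solution set of the reduced system \eqref{eq:agent_dynamics} coincides with that of the constrained system \eqref{eq:general_dynamics}, by establishing the two inclusions separately. Three structural facts drive the argument: (i) the columns of $\mathbf{G}_i(\x_i)$ form a basis of $\ker \mathbf{A}_i(\x_i)^T$, hence $\mathbf{G}_i$ has full column rank $m_i$, $\mathbf{G}_i^T\mathbf{A}_i = \mathbf{0}$, and $\operatorname{range}\mathbf{A}_i = (\operatorname{range}\mathbf{G}_i)^\perp = \ker\mathbf{G}_i^T$; (ii) the inertia matrix $\mathbf{B}_i$ is symmetric positive definite, so $\widetilde{\mathbf{B}}_i = \mathbf{G}_i^T\mathbf{B}_i\mathbf{G}_i$ is symmetric positive definite, hence invertible; (iii) $\mathbf{G}_i^T\mathbf{D}_i$ is invertible by hypothesis, so \eqref{eq:linearizing_feedback} is well posed and $\boldsymbol{\tau}_i$ is uniquely determined by $(\x_i,\mathbf{v}_i,\mathbf{u}_i)$.

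First I would take any trajectory of \eqref{eq:general_dynamics} driven by \eqref{eq:linearizing_feedback}. The Pfaffian constraint $\mathbf{A}_i^T\dot{\x}_i=\mathbf{0}$ together with fact (i) yields a unique $\mathbf{v}_i(t)$ with $\dot{\x}_i=\mathbf{G}_i(\x_i)\mathbf{v}_i$; this is the pseudo-velocity of \eqref{eq:agent_dynamics}. Differentiating gives $\ddot{\x}_i=\mathbf{G}_i\dot{\mathbf{v}}_i+\dot{\mathbf{G}}_i\mathbf{v}_i$; substituting into the dynamic equation of \eqref{eq:general_dynamics} and left-multiplying by $\mathbf{G}_i^T$ annihilates the constraint reaction ($\mathbf{G}_i^T\mathbf{A}_i\boldsymbol{\lambda}_i=\mathbf{0}$) and leaves $\widetilde{\mathbf{B}}_i\dot{\mathbf{v}}_i+\widetilde{\mathbf{c}}_i(\x_i,\mathbf{v}_i)=\mathbf{G}_i^T\mathbf{D}_i\boldsymbol{\tau}_i$, with $\widetilde{\mathbf{c}}_i$ exactly as in the statement. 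Inserting \eqref{eq:linearizing_feedback} turns the right-hand side into $\widetilde{\mathbf{B}}_i\mathbf{u}_i+\widetilde{\mathbf{c}}_i$; cancelling $\widetilde{\mathbf{c}}_i$ and using invertibility of $\widetilde{\mathbf{B}}_i$ (fact (ii)) gives $\dot{\mathbf{v}}_i=\mathbf{u}_i$, which together with $\dot{\x}_i=\mathbf{G}_i\mathbf{v}_i$ is precisely \eqref{eq:agent_dynamics}.

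For the converse I would start from a trajectory $(\x_i,\mathbf{v}_i)$ of \eqref{eq:agent_dynamics}, define $\boldsymbol{\tau}_i$ by \eqref{eq:linearizing_feedback}, and reconstruct the Lagrange multiplier. The constraint holds automatically, since $\mathbf{A}_i^T\dot{\x}_i=\mathbf{A}_i^T\mathbf{G}_i\mathbf{v}_i=\mathbf{0}$. With $\ddot{\x}_i=\mathbf{G}_i\mathbf{u}_i+\dot{\mathbf{G}}_i\mathbf{v}_i$, set $\mathbf{r}_i:=\mathbf{B}_i\ddot{\x}_i+\mathbf{c}_i(\x_i,\dot{\x}_i)-\mathbf{D}_i\boldsymbol{\tau}_i$; running the previous computation in reverse gives $\mathbf{G}_i^T\mathbf{r}_i=\widetilde{\mathbf{B}}_i(\dot{\mathbf{v}}_i-\mathbf{u}_i)=\mathbf{0}$, so $\mathbf{r}_i\in\ker\mathbf{G}_i^T=\operatorname{range}\mathbf{A}_i$ by fact (i), and there exists $\boldsymbol{\lambda}_i$ (unique if $\mathbf{A}_i$ has full column rank) with $\mathbf{A}_i\boldsymbol{\lambda}_i=\mathbf{r}_i$, i.e. \eqref{eq:general_dynamics} is satisfied. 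I expect the only genuinely delicate point to be this rank/range bookkeeping — in particular the identity $\operatorname{range}\mathbf{A}_i=\ker\mathbf{G}_i^T$, which is exactly what guarantees that every residual can be absorbed into an admissible constraint reaction; everything else is direct substitution and cancellation, and for the detailed verification one may appeal to~\cite{Siciliano2009}, Chap.~11.4.
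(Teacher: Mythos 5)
Your proof is correct and complete. The paper offers no proof of its own --- the proposition is imported verbatim from \cite{Siciliano2009}, Chap.~11.4 --- and your argument reproduces exactly the standard reduction found there: left-multiply the constrained dynamics by $\mathbf{G}_i^T$ to annihilate the reaction term $\mathbf{A}_i\boldsymbol{\lambda}_i$ and obtain $\widetilde{\mathbf{B}}_i\dot{\mathbf{v}}_i+\widetilde{\mathbf{c}}_i=\mathbf{G}_i^T\mathbf{D}_i\boldsymbol{\tau}_i$, then substitute the feedback and invert $\widetilde{\mathbf{B}}_i$; your converse reconstruction of $\boldsymbol{\lambda}_i$ via $\ker\mathbf{G}_i^T=(\operatorname{range}\mathbf{G}_i)^\perp=(\ker\mathbf{A}_i^T)^\perp=\operatorname{range}\mathbf{A}_i$, and your explicit flagging of the hypotheses the paper leaves tacit (symmetric positive definiteness of $\mathbf{B}_i$ so that $\widetilde{\mathbf{B}}_i$ is invertible, and full column rank of $\mathbf{A}_i$, i.e.\ independent constraints), are sound and if anything more careful than the citation requires.
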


Therefore, if the vector $\mathbf{u}_i$ is obtained to control the system with dynamics~\eqref{eq:agent_dynamics}, relation~\eqref{eq:linearizing_feedback} allows for the computation of the input $\boldsymbol{\tau}_i$ for the system~\eqref{eq:general_dynamics}. 

%TODO: mettere riferimento a Siciliano per modelli agenti? (Giovanni)

Note that the dynamics~\eqref{eq:agent_dynamics} along with~\eqref{eq:agent_pose} allow for the description of a sufficiently wide variety of vehicle dynamic models, thus enhancing team heterogeneity.

The following definitions will be used in the remainder of the paper: $\x = \ver{\x_i}$, $\mathbf{v} = \ver{\mathbf{v}_i}$, $\mathbf{z} = [\x^T, \mathbf{v}^T]^T$.

\section{Persistent Coverage Control}

\subsection{Problem Statement}

The \emph{persistent coverage} task deals with the recursive exploration of the operational area: the information gathered by the agents through their sensors becomes obsolete as time passes, thus requiring the team to visit regions of $Q$ where the amount of actual information has faded. 

To model this phenomenon, first we define the function $I \colon \RealSet_0^+\times Q \to \RealSet_0^+$, that quantifies the amount of useful information available at time $t$ in $\q \in Q$. The following equation models the \emph{information decay process}:
\begin{equation}
\label{eq:information_dynamics}
	\dot{I}(t,\q) 
	= 
	\delta I(t,\q) 
	+ 
	d(\p,\q)
	, \quad 
	\delta < 0
\end{equation}
i.e. the agents through the \ac{CTDF} give a positive contribution to information gathering, while the information decay rate $\delta$ yields its degradation. Denoting with $C^* > 0$ the desired level of information that should be maintained over $Q$, the \ac{TDF} describing the persistent coverage task is:
\begin{equation}
\label{eq:persistent_coverage_TDF}
	d_*(t,\q) = \max\left\{0, C^* - I(t,\q) \right\}
\end{equation}
while the \ac{TEF} models the difference between regions insufficiently covered and the current coverage provided by the team. Agents are expected to move towards its minimization.

To quantify how effectively the task is being fulfilled, the error index $\xi \colon \RealSet_0^+\times \mathcal{C}(Q)^N \to \RealSet_0^+$ is introduced:
\begin{equation}
	\xi(t,\p) = \int_Q f(e(t,\p,\q))\sigma(\q)\de \q
\end{equation}
where $f \colon \RealSet \to \RealSet_0^+$ is a \emph{penalty function} defined as $f(e) = \max\{0, e \}^p$ with $p = \{2,3,...\}$, and $\sigma \colon Q \to \RealSet_0^+$ is a weight that specifies the point importance in the environment.

The task is properly accomplished if the error function is kept as low as possible.

\begin{remark}
	The penalty function $f(\cdot)$ is continuous in $\RealSet$ and strictly convex in $\RealSet^+$, along with $\partial^n f/\partial e^n$ for $n<p$. 
\end{remark}	
	
\begin{remark}
	\label{rem:bounded_xi}
	Note that due to the definition of the \ac{TDF} in~\eqref{eq:persistent_coverage_TDF}, the error index $\xi(\cdot)$ is always bounded: 
	\begin{equation}
		0 \leq \xi(t,\p) \leq \int_{Q}f(C^*)\sigma\de \q
	\end{equation}
	for $t \geq t_0$.
\end{remark}
	
\subsection{Obstacles and Collision Avoidance}

Besides covering, agents are expected to avoid collisions with others, as well as with obstacles in the environment. Hence, the control action must guarantee that the distance between the agent and either another agent or an obstacle does not drop below a \emph{safety threshold} $r > 0$. We make the following assumption on the agents detection capabilities:

\begin{assumption}
	Each agent is able to detect other agents or obstacles when the relative distance is under a \emph{detection range} $R > r$. 
\end{assumption}

A team deployment is \emph{safe} if $\p \in \mathcal{P}(r)$, where:
\begin{multline}
	\mathcal{P}(x) 
	= 
	\left\{ 
		\p \in \mathcal{C}(\RealSet^n)^N 
		\colon
		\rho_{i,j} > x 
		\; \land \; 
		\rho_{i,k}^o > x \,
		\right. 
		\\ 
		\left.
		\forall i,j \in \mathcal{T}
		,\, 
		i\ne j
		,\, 
		\forall k \in \mathcal{O}
	\right\}
\end{multline}
with $\rho_{i,j}$ and $\rho_{i,k}^o$ denoting the distances of agent $i$ from agent $j$ and from the $k$-th obstacle point closest to it, the latter given by $\mathbf{c}_{k} \colon \mathcal{C}(Q) \to Q_{obs}$:
\begin{gather}
	\rho_{i,j}
	=
	\left\| \mathbf{S}(\p_i - \p_j) \right\|
	,\quad
	\rho_{i,k}^o 
	= 
	\left\| \mathbf{S}\p_i - \mathbf{c}_{k}(\p_i) \right\|
	\\
	\mathbf{c}_{k}(\p_i)
	= 
	\argmin_{\q \in Q_\text{obs}}\| \mathbf{S}\p_i - \q  \|
\end{gather}
The matrix $\mathbf{S}\in \RealSet^{n\times \dim \{\mathcal{C}(Q)\}}$ extracts the position components from the agent's pose vector.

To guarantee the team safety, the following  \emph{collision/obstacle avoidance function} is defined:
\begin{align}
	v(\p) 
	&=
	v^c(\p) + v^o(\p)
	\\
	&=
	\sum_{i \in \mathcal{T}} 
	v_i^c(\p) 
	+ 
	\sum_{i \in \mathcal{T}} 
	v_i^o(\p_i)
	\\
	&=
\label{eq:avoidance_function}
	\sum_{i \in \mathcal{T}} 
		\sum_{ j\in\mathcal{T}\backslash\{i\} }
		\!\! 
		l( \rho_{i,j} ) 
	+
	\sum_{i \in \mathcal{T}} 
		\sum_{k \in \mathcal{O}}
		l( \rho_{i,k}^o )
\end{align} 
where $l \colon \RealSet\rightarrow\RealSet_0^+$ is adapted from~\cite{Stipanovic2007}:
\begin{equation}
\label{eq:ell_function}
	l(x) 
	= 
	\left(
		\min\left\{ 0, \frac{x^2-R^2}{x^2-r^2} \right \}
	\right)^2
\end{equation}

\begin{remark}
	The function $l(\cdot)$ is such that if $x \ge R$, then $l(x) = 0$, while for $x \in \left(r, R\right)$ the function is strictly decreasing, and $\lim_{x\rightarrow r^+}l(x) = +\infty$. Given the avoidance function definition in~\eqref{eq:avoidance_function}, collisions do not occur and $\p(t) \in \mathcal{P}(r)$ is always verified, as long as $v(\cdot)$ attains finite values.
\end{remark}

\subsection{Control Law Definition}

The following control law is chosen for the agents:
\begin{equation}
\label{eq:control_law}
	\mathbf{u}_i(t,\x_i,\p) 
	= 
	\mathbf{u}_{\xi,i}(t,\x_i,\p) 
	+ 
	\mathbf{u}_{v,i}(\x_i,\p) 
	- 
	\mu \mathbf{v}_i
	,\,\,
	\mu > 0
\end{equation}
where:
\begin{align}
	\mathbf{u}_{\xi,i}(t,\x_i,\p) 
	&= 
	- 
	\beta 
	\mathbf{G}_i(\x_i)^T
	\frac{\partial \h_i(\x_i)}{\partial \x_i}^T
	\!\!
	\frac{\partial \xi(t,\p)}{\partial \p_i}^T
	\!\!\!,\,
	&\beta > 0
	\\
	\mathbf{u}_{v,i}(\x_i, \p) 
	&= 
	-
	\gamma 
	\mathbf{G}_i(\x_i)^T
	\frac{\partial \h_i(\x_i)}{\partial \x_i}^T
	\!\!
	\frac{\partial v(\p)}{\partial \p_i}^T
	\!\!\!,\, 
	&\gamma > 0
\end{align}
While $\mathbf{u}_{\xi,i}(\cdot)$ accelerates agent $i$ towards regions where $I(\cdot)$ is lower, thus inducing a minimization of the error function, $\mathbf{u}_{v,i}(\cdot)$ guarantees that it does not collide neither with other agents nor with obstacles. Finally, the term $-\mu \mathbf{v}_i$ acts as a damping term for the control law. 
\begin{theorem}
\label{th:Persistent_Coverage}
	If the initial team deployment is safe, i.e. $\p(t_0) \in \mathcal{P}(r)$, then, under the control law~\eqref{eq:control_law}, the team performs the persistent coverage task and no collision occurs, i.e. $\p(t) \in \mathcal{P}(r)$ for all $t \geq t_0$.
\end{theorem}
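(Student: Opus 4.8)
The plan is to build a single Lyapunov-like functional that simultaneously controls the coverage error index, the avoidance function and the agents' pseudo-velocities. Taking into account the closed loop $\dot{\x}_i = \mathbf{G}_i(\x_i)\mathbf{v}_i$, $\dot{\mathbf{v}}_i = \mathbf{u}_i$, I would consider
\[
	V(t,\mathbf{z}) = \beta\,\xi(t,\p) + \gamma\,v(\p) + \tfrac{1}{2}\sum_{i\in\mathcal{T}}\Vert\mathbf{v}_i\Vert^2,
	\qquad \p = \ver{\h_i(\x_i)}.
\]
Differentiating along the trajectories and using $\dot{\p}_i = \tfrac{\partial\h_i}{\partial\x_i}\mathbf{G}_i(\x_i)\mathbf{v}_i$, the terms $\beta\,\tfrac{\partial\xi}{\partial\p_i}\dot{\p}_i$ from $\beta\dot{\xi}$ and $\gamma\,\tfrac{\partial v}{\partial\p_i}\dot{\p}_i$ from $\gamma\dot{v}$ equal, respectively, $-\mathbf{v}_i^T\mathbf{u}_{\xi,i}$ and $-\mathbf{v}_i^T\mathbf{u}_{v,i}$ (both sides being scalars, transposition is immaterial), so they cancel against $\sum_i\mathbf{v}_i^T\mathbf{u}_i$; since $v$ carries no explicit time dependence this leaves
\[
	\dot{V} = \beta\,\frac{\partial\xi}{\partial t} - \mu\sum_{i\in\mathcal{T}}\Vert\mathbf{v}_i\Vert^2.
\]
Carrying out this cancellation carefully — keeping track of the fact that $\xi$ depends explicitly on $t$ through $I(t,\cdot)$ whereas $v$ does not — is the first main step.

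The second step is a uniform bound on $\partial\xi/\partial t$. Writing $\partial\xi/\partial t = \int_Q f'(e)\,(\partial d_*/\partial t)\,\sigma\,\de\q$, one has $e = d_* - d \le C^*$, hence $f'(e) \le p\,(C^*)^{p-1}$; on $\{d_*>0\}$ it holds $d_* = C^* - I$, so $\partial d_*/\partial t = -\dot{I} = -\delta I - d$, while on $\{d_*=0\}$ one has $e = -d \le 0$, so $f'(e)=0$ and the kink of $\max\{0,\cdot\}$ contributes nothing (so $\xi(t,\p(t))$ is Lipschitz in $t$ and its derivative exists a.e.). Because each \ac{ADF} is continuous on the compact set $Q$ and the team is finite, $d$ is bounded; then $\dot{I} = \delta I + d$ with $\delta<0$ keeps $I(t,\cdot)$ bounded for all $t\ge t_0$, so $\vert\partial\xi/\partial t\vert\le K$ for some constant $K\ge0$ and therefore $\dot{V}\le\beta K - \mu\sum_i\Vert\mathbf{v}_i\Vert^2\le\beta K$.

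The third step converts this into the assertions of the theorem. Integrating, $V(t)\le V(t_0)+\beta K(t-t_0)$ on the maximal interval of existence; since $V\ge\gamma v\ge0$, $V\ge\tfrac12\sum_i\Vert\mathbf{v}_i\Vert^2$ and $V\ge\beta\xi\ge0$, both the pseudo-velocities and $v(\p)$ stay finite at every finite time, and as the positions evolve in the bounded operational area $Q$ there is no finite escape time, so the closed-loop solution exists for all $t\ge t_0$. For the safety claim, suppose $\p(\cdot)$ were to leave $\mathcal{P}(r)$ and let $t_1<\infty$ be the first instant at which some $\rho_{i,j}$ or $\rho_{i,k}^o$ reaches $r$; by $\lim_{x\to r^+}l(x)=+\infty$ in~\eqref{eq:ell_function} this forces $v(t)\to+\infty$ as $t\to t_1^-$, contradicting $v(t)\le[V(t_0)+\beta K(t_1-t_0)]/\gamma<\infty$. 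Hence $\p(t_0)\in\mathcal{P}(r)$ implies $\p(t)\in\mathcal{P}(r)$ for all $t\ge t_0$.

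Finally, that the team \emph{performs} the persistent coverage task is read off the structure of $\mathbf{u}_{\xi,i}$: it is the negative gradient of $\xi$ with respect to $\p_i$ mapped through the feasibility operator $\mathbf{G}_i(\x_i)^T(\partial\h_i/\partial\x_i)^T$, so each agent is continuously driven along the admissible direction that most decreases the coverage error, while $\xi$ remains bounded (Remark~\ref{rem:bounded_xi}); in the degenerate static case $\partial\xi/\partial t\equiv0$ one gets $\dot{V}=-\mu\sum_i\Vert\mathbf{v}_i\Vert^2\le0$ and LaSalle's invariance principle gives convergence to the largest invariant set contained in $\{\mathbf{v}=\mathbf{0},\ \mathbf{u}_{\xi,i}+\mathbf{u}_{v,i}=\mathbf{0}\}$, i.e. a feasible critical configuration of $\beta\xi+\gamma v$. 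I expect the principal obstacle to be the second step, namely obtaining the uniform bound $\dot{V}\le\text{const}$ — that is, the exact gradient cancellation of the first step together with the bound on $\partial\xi/\partial t$ across the non-smooth $\max$ operators — after which the no-escape and no-collision arguments are routine.
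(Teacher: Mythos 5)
Your proposal is correct and rests on the same core construction as the paper: the same Lyapunov-like functional (your $V$ is $\beta$ times the paper's \eqref{eq:V}), the same cancellation of the gradient terms against $\sum_i\mathbf{v}_i^T\mathbf{u}_i$ along the closed loop, and the same residual $\dot V=\beta\,\partial\xi/\partial t-\mu\sum_i\Vert\mathbf{v}_i\Vert^2$.

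Where you diverge is in how the positive drift $\partial\xi/\partial t$ is disposed of. The paper expands $\partial\xi/\partial t$ into $E_0=-\int_Q(\partial f/\partial e)\,\delta I\,\sigma\,\de\q\ge 0$ and $E_1=-\int_Q(\partial f/\partial e)\,d\,\sigma\,\de\q\le 0$, and then argues qualitatively that, since the only explicitly time-dependent term of $V$ is $\xi$ and $\xi$ is bounded by Remark~\ref{rem:bounded_xi}, the term $E_0$ cannot drive $V$ to infinity; this is aimed at a \emph{uniform} bound on $V$, though the step from ``$\xi$ is bounded'' to ``$V$ is bounded'' is asserted rather than derived (the time integral of $\partial\xi/\partial t$ is not $\xi$ itself, so boundedness of $\xi$ does not by itself cap the accumulated positive drift). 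You instead prove the uniform estimate $\vert\partial\xi/\partial t\vert\le K$ via $f'(e)\le p(C^*)^{p-1}$ and the boundedness of $I$ forced by $\delta<0$, obtaining $V(t)\le V(t_0)+\beta K(t-t_0)$. This buys a fully rigorous exclusion of collisions at every finite time (finiteness of $v(\p(t))$ plus the blow-up of $l$ at $r$ in \eqref{eq:ell_function}), at the price of a weaker conclusion than the paper intends: linear-in-time growth of $V$ does not give a uniform bound, so it says less about long-run task performance. Your closing LaSalle argument for the static case and the reading of $\mathbf{u}_{\xi,i}$ as a feasible descent direction are additions the paper does not make; its own justification that the task is ``performed'' is the informal observation that $E_1$ and $E_2$ are the (always nonpositive) agents' contributions. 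In short: same functional and same mechanism, with your version more careful on the finiteness/safety step and the paper's version claiming (less rigorously) the stronger uniform boundedness.
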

\begin{proof}
	Consider the following function:
	\begin{equation}
	\label{eq:V}
		V(t,\mathbf{z}) 
		= 
		\xi(t,\p(\x)) 
		+ 
		\frac{\gamma}{\beta}v(\p(\x)) 
		+ 
		\frac{1}{2\beta}\sum_{i \in \mathcal{T}}\left\|\mathbf{v}_i\right\|^2
	\end{equation}
	If $V(\cdot)$ attains finite values, and the agents act in order to decrease it, then the persistent coverage task is safely executed. Note that if the agents starts from a safe deployment, i.e. $\p(t_0) \in \mathcal{P}(r)$, then $V(\cdot)$ is finite at $t_0$. 
	
	Differentiation of~\eqref{eq:V} with respect to time yields:
	\begin{equation}
	\label{eq:V_dot_1}
		\dot{V} 
		= 	
		\frac{\partial \xi}{\partial t} 
		+
		\sum_{i \in \mathcal{T}} 
		\left[ 
			\left(  
				\frac{\partial\xi}{\partial \p_i} 
				+ 
				\frac{\gamma}{\beta}\frac{\partial v}{\partial \p_i}
			\right)\frac{\partial \h_i}{\partial \x_i}\mathbf{G}_i 
			+ 
			\frac{1}{\beta}\dot{\mathbf{v}}_i^T 
		\right] 
		\mathbf{v}_i
	\end{equation}
	Since $\dot{\mathbf{v}}_i = \mathbf{u}_i$ (see~\eqref{eq:agent_dynamics}), substitution of~\eqref{eq:control_law} in~\eqref{eq:V_dot_1} gives:
	\begin{equation}
		\dot{V} 
		= 
		\frac{\partial \xi}{\partial t} 
		- 
		\frac{\mu}{\beta}\sum_{i \in \mathcal{T}}\left\|\mathbf{v}_i\right\|^2
	\end{equation}
	The term $\partial \xi/\partial t$ can be further expanded obtaining:
	\begin{align}			
		\dot{V} 
		&= 
		- 
		\int_Q \frac{\partial f}{\partial e}\, \dot{I}\, \sigma\,\de \q 
		- 
		\frac{\mu}{\beta}\sum_{i \in \mathcal{T}} \left\| \mathbf{v}_i\right\|^2 
		\\
		&= 
		-
		\int_Q \frac{\partial f}{\partial e}\, \delta I\,\sigma\, \de \q 
		- 
		\int_Q \frac{\partial f}{\partial e}\,d\,\sigma\, \de \q 
		- 
		\frac{\mu}{\beta} \sum_{i \in \mathcal{T}} \left\| \mathbf{v}_i\right\|^2
		\\
	\label{eq:dot_V_final}
		&= 
		E_0 + E_1 + E_2
	\end{align}
	Note that $E_0 \geq 0$, since $\delta < 0$, whereas $E_1 \leq 0$ and $E_2 \leq 0$. The sign of $E_0$ is the consequence of the information degradation that characterizes the persistent coverage task. This produces an increment of $V(\cdot)$ with time. However, note that the only term in~\eqref{eq:V} which is explicitly time-dependent is the error index $\xi(\cdot)$. Since $\xi(\cdot)$ is bounded (see Remark~\ref{rem:bounded_xi}), then the sign of $E_0$ cannot produce an unlimited growth in $V(\cdot)$. The agents contribution is expressed by the terms $E_1$ and $E_2$, which always give a negative contribution. Hence, the persistent coverage task is properly executed by the agents, and $V(\cdot)$ attains finite values, proving the safe execution of the task. 
	
\end{proof}

\section{Control Law Decentralization}

Apart from the terms $\mathbf{G}_i(\x_i)^T (\partial\h_i(\x_i)/\partial \x_i)^T$ and $-\mu\mathbf{v}_i$, which are fully known as long as the state $\mathbf{z}_i$ is observable and the vehicle model is known, agent $i$ needs to compute the gradients of the error $\xi(\cdot)$ and the avoidance functions terms $v^c(\cdot)$ and $v^o(\cdot)$ in order to obtain $\mathbf{u}_i(\cdot)$. It will be shown that such quantities can be obtained in a distributed way without compromising the task fulfillment. 

To this end, we introduce the proximity graph $\mathcal{G}(t)$, with the agents as nodes. An edge between two agents exists if they are \emph{neighbors}, that is $\rho_{i,j} \le R_\text{com}$, with $R_\text{com}$ denoting agents \emph{communication range}. The set of neighbors of agent $i$ at time $t$ will be denoted with $\mathcal{N}_i(t)\subseteq\mathcal{T}$. In the following, we will consider the next two assumptions holding:

\begin{assumption}
	\label{assumption:graph_connected}
	The graph $\mathcal{G}(t)$ is always connected.
\end{assumption}

\begin{assumption}
	\label{assumption:com_range} 
	Let $r^\text{cov}_i = \max\{\| \q_i - \q\|  \colon  d_i(\p_i,\q) > 0\}$ and $\overline{r}^\text{cov} = \max_{i\in \mathcal{T}} \{r^\text{cov}_i\}$. We assume that $R_\text{com} > 2\overline{r}^{cov}$. As a result, agents with overlapping \ac{ADF}s are neighbors.
\end{assumption}

\subsection{Collision and Obstacle Avoidance} 

To enforce the collision and obstacle avoidance policy agent $i$ must compute the gradient of $v(\cdot)$, that is:
%
%\begin{align}
%	\frac{\partial v(\p)}{\partial \p_i} 
%	&= 
%	2\frac{\partial v_i^c(\p)}{\partial \p_i} 
%	+ 
%	\frac{\partial v_i^o(\p_i)}{\partial \p_i}
%	\\
%	&=
%	2 \sum_{j \in \mathcal{T}/\{i\}}
%	\frac{\partial l(\rho_{i,j})}{\partial \rho_{i,j}}
%	\frac{\partial \rho_{i,j}}{\partial \p_i}
%	+
%	\sum_{k \in \mathcal{O}}
%	\frac{\partial l(\rho_{i,k}^o)}{\partial \rho_{i,k}^o}
%	\frac{\partial \rho_{i,k}^o}{\partial \p_i}
%\end{align}
\begin{equation}
	\frac{\partial v(\p)}{\partial \p_i} 
	=
	2 \!\!\!\sum_{j \in \mathcal{T}/\{i\}}
	\frac{\partial l(\rho_{i,j})}{\partial \rho_{i,j}}
	\frac{\partial \rho_{i,j}}{\partial \p_i}
	+
	\sum_{k \in \mathcal{O}}
	\frac{\partial l(\rho_{i,k}^o)}{\partial \rho_{i,k}^o}
	\frac{\partial \rho_{i,k}^o}{\partial \p_i}
\end{equation}
where:
\begin{equation}
	\frac{\partial l(x)}{\partial x}
	=
	\left\{
		\begin{array}{ll}
			0, & \,\, x > R  \,\, | \,\, x < r
			\\
			4\frac{\left(R^2 - r^2\right)\left(x^2 - R^2\right)}{\left(x^2 - r^2\right)^3} x, & \,\, R \geq x > r
			\\
			\text{undefined}, & \,\, x = r
		\end{array}
	\right.
\end{equation} 
\begin{equation}
	\frac{\partial\rho_{i,j}}{\partial \p_i} 
	= 
	\frac{\left(\p_i - \p_j\right)^T\mathbf{S}^T\mathbf{S}}{\rho_{i,j}}
	,\quad
	\frac{\partial\rho_{i,k}^o}{\partial \p_i} 
	= 
	\frac{\left(\mathbf{S}\p_i - \mathbf{c}_k(\p_i)\right)^T\mathbf{S}}{\rho_{i,k}^o}
\end{equation}

Therefore, the collision and obstacle avoidance function gradients must be computed only when another agent $j$ or an obstacle $k$ are within the agent $i$ detection radius, that is $\rho_{i,j} \leq R$ and $\rho_{i,k}^o \leq R$, respectively. Thus, the collision/obstacle avoidance is implicitly distributed.

\subsection{Coverage Control and Information Level Estimation}
\label{sec:CoverageControlDecentralization}

To accomplish the coverage task, agents need to compute the quantity $\partial \xi /\partial \p_i$. Using the chain rule and observing that the \ac{ADF}s are dependent on the respective agent's pose only, the following holds:
\begin{equation}
	\frac{\partial \xi}{\partial \p_i} 
	= 
	-
	\int_{Q} 
	\frac{\partial f}{\partial e}
	\frac{\partial d}{\partial \p_i}
	\sigma \de \q
	= 
	-
	\int_{Q_i} 
	\frac{\partial f}{\partial e}\frac{\partial d_i}{\partial \p_i}
	\sigma \de \q
\end{equation}
where $Q_i(\p_i) = \supp{\partial d_i(\p_i,\cdot)/\partial \p_i}$. The weight $\sigma$ is assumed known to each agent prior to the deployment. The gradient $\partial d_i/\partial \p_i$ can be computed autonomously by each agent. To obtain the term $\partial f/\partial e$ the knowledge of the \ac{TEF} $e(t,\p,\q)$ on $\q \in Q_i(\p_i)$ is needed. We recall that $e(t,\p,\q) = d_*(t,\q) - d(\p,\q)$, see~\eqref{eq:TEF}. 

Because of Remark~\ref{remark:support_inclusion} and Assumption~\ref{assumption:com_range}, if all agent $i$'s neighbors share with it their poses and their \ac{ADF} parameters, agent $i$ is able to compute the \ac{CTDF} $d(\p,\q)$ on $\q \in Q(\p_i)$.

The computation of $d_*(t,\q)$ requires the knowledge of the attained level of information $I(t,\q)$. We now introduce an algorithm for the decentralized estimation of $I(t,\q)$, adapted from~\cite{PalaciosGasos2016}. Whereas it was originally formulated in a discrete time context, the algorithm is here rearranged to fit in a continuous time scenario. The conventions of the descriptor function framework will be employed. For the sake of compactness, the quantities' dependence on $\p$ and $\q$ will be dropped when this will not compromise clarity. 

Each agent autonomously computes a continuous time estimation $\hat{I}_i(t)$, which is shared with the neighbors periodically  during the \emph{update instants} $t_k = kT$, with $k = \{1,2,\dots\}$. Given~\eqref{eq:information_dynamics}, it follows that for $t\in \left[t_{k-1}, t_k\right)$:
\begin{equation}
	I(t) 
	= 
	e^{\delta (t-t_{k-1})}I(t_{k-1}) 
	+ 
	\sum_{i\in\mathcal{T}}\int_{t_{k-1}}^t e^{\delta (t-\tau)} d_i(\tau)\de \tau
\end{equation}
Since no communication occurs between update instants, agent $i$ estimates $I(t)$ considering only its contribution to the task:
\begin{equation}
\label{eq:information_time_history}
	\hat{I}_i(t) 
	= 
	e^{\delta (t-t_{k-1})}\hat{I}(t_{k-1}) 
	+ 
	\int_{t_{k-1}}^t \!\! e^{\delta (t-\tau)}d_i(\tau)\de \tau,
\end{equation}
Then, at each update instant $t_k$ the agent communicates to the neighbors its present estimation $\hat{I}_i(t_k)$:
\begin{equation}
	\hat{I}_i(t_k) = e^T\hat{I}_i(t_{k-1}) + \tilde{d}_i(t_k)
	,\quad
	\tilde{d}_i(t_k) = \int_{t_{k-1}}^{t_k} \!\! e^{\delta(t-\tau)}d_i(\tau)\de \tau
\end{equation}
After the agents exchange their estimates, the first correction is performed: 
\begin{itemize}
	\item for all $\q \in \supp{\tilde{d}_i(t_k)}$
	\begin{equation}
		\hat{I}_i^-(t_k) 
		= 
		\hat{I}_i(t_k) 
		+ 
		\!\!
		\sum_{j\in\mathcal{N}_i(t_k)}
		\!\!
		\max\left\{0, \,\, \hat{I}_j(t_k) - e^T\hat{I}_i(t_{k-1}) \right\}
	\end{equation}	
	\item whereas for all $\q \notin \supp{\tilde{d}_i(t_k)}$
	\begin{equation}
		\hat{I}_i^-(t_k) 
		= 
		\max_{j\in\mathcal{N}_i(t_k)}\left\{\hat{I}_j(t_k), \,\, e^T \hat{I}_i(t_{k-1})\right\}
	\end{equation}
\end{itemize}

At this point, since for all $\q \notin \supp{\tilde{d}_i(t_k)}$ only the highest contribution between agent $i$'s coverage (which, by definition, results from the time degradation of the one attained at time $t_{k-1}$) and those of neighbors is considered, agent $i$ obtains a raw coverage underestimation. In fact, potential overlappings between $\tilde{d}_j(t_k)$ and $\tilde{d}_l(t_k)$, with $j,l \in \mathcal{N}_i(t_k)$, which yield a higher coverage level than the one currently estimated for all $\q \notin \supp{\tilde{d}_i(t_k)}$, would be ignored. However, such information can be easily retrieved through a second correction, which improves each agent's estimation. In fact, having received $\hat{I}_j(t_k)$ from neighbors, agent $i$ is able to compute the portion of $\tilde{d}_i(t_k)$ overlapped with those of neighbors. Let 
\begin{equation}
	O_i(t_k) 
	= 
	\bigcup_{j\in \mathcal{N}_i(t_k)} 
	\left\{
		\q \in Q 
		\colon 
		\hat{I}_j(t_k) - e^T \hat{I}_i(t_{k-1}) > 0 
	\right\}
\end{equation}
be such overlapped area and define 
\begin{equation}
	\tilde{d}_i^o(t_k) 
	= 
	\left\{
	\begin{array}{ll}
		\tilde{d}_i(t_k), & \q \in O_i(t_k)\\
		0, & \q \notin O_i(t_k)
	\end{array} 
	\right.
\end{equation}
Neighbors then exchange their $\tilde{d}_i^o(t_k)$ and perform the last correction:
\begin{itemize}
	\item for all $\q\in \supp{\tilde{d}_i(t_k)}$
	\begin{equation}
		\hat{I}_i(t_k) = \hat{I}_i^-(t_k), 
	\end{equation}
	
	\item whereas for all $\q\notin \supp{\tilde{d}_i(t_k)}$
	\begin{equation}
		\hat{I}_i(t_k) 
		= 
		\hat{I}_i^-(t_k) 
		+ 
		\!\!
		\sum_{j\in\mathcal{N}_i(t_k)}
		\!\!\!
		\tilde{d}_j^o(t_k) 
		-
		\max_{j\in\mathcal{N}_i(t_k)}
		\!\!
		\tilde{d}_j^o(t_k) 
	\end{equation}
\end{itemize}

% Observe that, agent $i$ does not need to compute the convolution integral appearing in \eqref{eq:information_time_history} in order to obtain $\tilde{d}_i(t_k)$: since $\tilde{d}_i(t_k) = \hat{I}_{com,i}(t_{k}) - e^{\delta T_{com}}\hat{I}_i(t_{k-1})$, only the complete integral $\hat{I}_{com,i}(t_k)$ and the last update $\hat{I}_i(t_{k-1})$ are needed.
 
The following result ensures the existence of regions where the estimation error is zero:
\begin{theorem}
\label{th:Estimation}
	Assume that $\hat{I}_i(0) = I(0)$ for all $\q \in Q$, and for all $i \in \mathcal{T}$. In addition, assume that each agent can travel for a maximum distance $\ell^\text{max}_T$ during a period $T$. Then, there exists a region centered in the agent position at time $t_k$, $\mathbf{S}\p_i(t_k)$, and with radius $r^* = R_\text{com} - \overline{r}^\text{cov} - (N-1)\ell^\text{max}_T$ denoted with:
	\begin{equation}
		Z_i(t_k) = \left\{\q \in Q \colon \| \mathbf{S}\p_i(t_k) - \q\| \le r^* \right\}
	\end{equation}
	such that $\hat{I}_i(t_k) = I(t_k)$ for all $\q \in Z_i(t_k)$, and $i \in \mathcal{T}$. 
\end{theorem}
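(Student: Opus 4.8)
The plan is to argue by induction on the update index $k$, carrying the slightly stronger invariant that at every update instant, for every agent, the estimate is a pointwise underestimate of the truth, $\hat{I}_i(t_k,\q)\le I(t_k,\q)$ for all $\q\in Q$, with equality on the ball $Z_i(t_k)$. The base case $k=0$ is immediate from the hypothesis $\hat{I}_i(0)=I(0)$, which holds on all of $Q$. For the inductive step I would fix an agent $i$ and a point $\q\in Z_i(t_k)$, and start from the exact propagation formula for the true field over one period, $I(t_k,\q)=e^{\delta T}I(t_{k-1},\q)+\sum_{j\in\mathcal{T}}\tilde{d}_j(t_k,\q)$, and then show that the two correction rounds executed by agent $i$ at $t_k$ reproduce exactly this quantity on $Z_i(t_k)$, while never exceeding it elsewhere.

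The geometric core is a lemma stating that every agent $j$ whose data is needed to evaluate the right-hand side at such a $\q$ is a neighbour of $i$ at $t_k$. This splits into two pieces. First, the agents that freshly cover $\q$ during the period: if $\tilde{d}_j(t_k,\q)\neq\mathbf{0}$ then $d_j(\p_j(t),\q)>0$ for some $t\in[t_{k-1},t_k]$, hence $\|\mathbf{S}\p_j(t)-\q\|\le\overline{r}^{\mathrm{cov}}$; combining this with $\|\q-\mathbf{S}\p_i(t_k)\|\le r^*$ and the per-period displacement bound $\ell^{\mathrm{max}}_T$ for both $i$ and $j$ gives $\rho_{i,j}(t_k)\le R_\mathrm{com}$, so $j\in\mathcal{N}_i(t_k)$. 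Second, the agents whose pre-period estimate is needed to recover the decayed term $e^{\delta T}I(t_{k-1},\q)$: by the induction hypothesis that value is available wherever $\q\in Z_m(t_{k-1})$, and one must track that in the worst case the correct prior value has to be propagated through a chain of up to $N-1$ agents, each of which may have drifted by at most $\ell^{\mathrm{max}}_T$ between $t_{k-1}$ and $t_k$, so that the centres $\mathbf{S}\p_m(t_{k-1})$ and $\mathbf{S}\p_m(t_k)$ differ by that much. The radius budget $r^* = R_\mathrm{com}-\overline{r}^{\mathrm{cov}}-(N-1)\ell^{\mathrm{max}}_T$ is precisely what pays for both the ADF footprint and this cumulative drift.

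With all relevant agents placed in $\mathcal{N}_i(t_k)$, I would then verify the corrections case by case. For $\q\notin\supp{\tilde{d}_i(t_k)}$ the field only decays, so the $\max$ over $\{\hat{I}_j(t_k),\,e^{\delta T}\hat{I}_i(t_{k-1})\}$ in the first correction selects the largest — hence, by the underestimation invariant, the exact — decayed prior value; for $\q\in\supp{\tilde{d}_i(t_k)}$ the first correction adds back the neighbours' extra coverage, and the second correction, through $\tilde{d}^o_j(t_k)$ and the subtraction of $\max_j\tilde{d}^o_j(t_k)$, removes the double-counting of overlaps among the neighbours' fresh footprints, so that $\sum_{j\in\mathcal{T}}\tilde{d}_j(t_k,\q)$ is recovered without duplication. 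I would also check that each of these operations keeps the result below the true field at every $\q$, so the global underestimate part of the invariant is preserved and the induction closes for all points, not only those in $Z_i(t_k)$.

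The step I expect to be the main obstacle is the radius bookkeeping in the second paragraph, and in particular the justification of the $(N-1)\ell^{\mathrm{max}}_T$ term: one must show that the recursion actually closes, i.e. that the prior value required at a point of $Z_i(t_k)$ is genuinely recoverable from neighbours' estimates that are themselves exact on the relevant sub-balls, which means unrolling the induction hypothesis across agents and arguing that at most $N-1$ indirection steps can ever be needed. A secondary, more routine nuisance is the continuous-time integral bookkeeping for $\tilde{d}_j(t_k,\q)=\int_{t_{k-1}}^{t_k}e^{\delta(t-\tau)}d_j(\tau)\,\de\tau$ and the matching decomposition of $\hat{I}_i(t)$ in \eqref{eq:information_time_history}, which must be shown to concatenate consistently across consecutive periods so that the one-step argument iterates cleanly.
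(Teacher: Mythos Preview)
Your approach is substantially more ambitious than the paper's. The paper does not give a self-contained argument at all: it simply observes that the statement is the continuous-time transposition of Theorem~IV.6 in~\cite{PalaciosGasos2016}, with the discrete-time maximum step $u^{\max}$ replaced by the per-period travel bound $\ell^{\max}_T$, and defers the entire proof to that reference. Your inductive scheme, carrying the global underestimate invariant and unpacking the two correction rounds, is essentially a reconstruction of what that cited proof must contain; it is a genuinely different route from the paper's reduction-by-citation, and buys you a self-contained result at the cost of replaying the combinatorics that the paper deliberately outsources.

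Two corrections to your case analysis. First, for $\q\notin\supp{\tilde d_i(t_k)}$ it is not true that ``the field only decays'': agent $i$ did not cover $\q$, but some neighbour $j$ may well have, and its contribution is already folded into $\hat I_j(t_k)$; the first-round $\max$ therefore picks up the decayed prior \emph{plus} the single largest neighbour contribution, not just the decayed prior. Second, you have the roles in the second correction swapped: for $\q\in\supp{\tilde d_i(t_k)}$ the second round is the identity $\hat I_i(t_k)=\hat I_i^-(t_k)$, while the overlap repair $\hat I_i^-(t_k)+\sum_j\tilde d_j^o(t_k)-\max_j\tilde d_j^o(t_k)$ is applied precisely when $\q\notin\supp{\tilde d_i(t_k)}$, to recover overlaps among neighbours that the first-round $\max$ collapsed. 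Neither point breaks your plan, but the verification step must be rerouted accordingly. Your identification of the $(N-1)\ell^{\max}_T$ budget as the cost of worst-case hop propagation through the connected graph is the right intuition and is indeed the crux of the argument you would be reconstructing.
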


\begin{proof}
	The proof consist in a transposition of Theorem~IV.6 proof in~\cite{PalaciosGasos2016}, which is proved in a discrete time scenario, to the continuous time case considered in this paper. Theorem~IV.6 in~\cite{PalaciosGasos2016} states that each agent's estimation is correct within a distance $R_\text{com} - \overline{r}^\text{cov} - (N-1)u^\text{max}$ at each discrete time instant, with $u^\text{max}$ being the maximum allowable control input norm for agents characterized by single-integrator kinematics ($\x(t+1) = \x(t) + \mathbf{u}(t)$), assuming that Assumptions~\ref{assumption:graph_connected} and~\ref{assumption:com_range} hold and that $\hat{I}_i(0) = I(0)$ for all $i \in \mathcal{T}$. In our case, the maximum allowed travel distance $\ell^\text{max}_T$ during a period $T$ is the continuous time analogous of the discrete time maximum control input norm $u^\text{max}$ in \cite{PalaciosGasos2016} for single integrators.
\end{proof}

\subsection{Decentralized Control Law}

The decentralized control law is then defined as follows:
\begin{equation}
\label{eq:control_law_decentralized}
	\hat{\mathbf{u}}_i(t,\x_i,\p) 
	= 
	\hat{\mathbf{u}}_{\xi,i}(t,\x_i,\p_i) 
	+ 
	\mathbf{u}_{v,i}(\x_i,\p) 
	- 
	\mu \mathbf{v}_i, \,\, \mu > 0
\end{equation}
where the coverage term is computed as follows:
\begin{equation}\label{eq:control_newestimation}
	\hat{\mathbf{u}}_{\xi,i}(t, \x_i, \p_i) 
	= 
	-\beta \mathbf{G}_i(\x_i)^T
	\frac{\partial \h_i(\x_i)}{\partial \x_i}^T
	\!\!
	\frac{\partial \hat{\xi}_i(t,\p_i)}{\partial \p_i}^T
	\!\!, \,\,
	\beta > 0
\end{equation}
The task error index is estimated by each agents using the distributed estimation discussed in Section~\ref{sec:CoverageControlDecentralization}:
\begin{equation}
	\frac{\partial \hat{\xi}_i(t,\p_i)}{\partial \p_i} 
	= 
	-
	\int_{Q_i(\p_i)} 
	\!\!
	\frac{\partial f(\hat{e}(t,\q))}{\partial \hat{e}(t,\q)}
	\frac{\partial d_i(\p_i,\q)}{\partial \p_i} 
	\sigma\de\q
\end{equation}
where the \ac{TEF} estimation is given by:
\begin{align}
	\hat{e}_i(t,\q) 
	&= 
	\hat{d}_{*,i}(t,\q) 
	- 
	\!\!
	\sum_{j\in\mathcal{N}_i(t)}
	\!\!
	d_j(\p_j(t),\q) 
	-
	d_i(\p_i(t),\q)
	\\
	\hat{d}_{*,i}(t,\q) 
	&= 
	\max \left\{0,C^* - \hat{I}_i(t,\q)\right\}
\end{align}

Observe that in the definition of $\hat{e}_i(\cdot)$ we assumed that agent $i$ has the instantaneous knowledge of the \ac{ADF}s of its neighbors. This is possible as long as agents are able to exchange their poses and \ac{ADF}s parameters at a sufficiently high rate. Therefore, it should be pointed out that two types of communication protocols need to be enforced. The first one, allowing estimation updates as described in Section \ref{sec:CoverageControlDecentralization}, must be guaranteed with period $T$ and requires the exchange of high amounts of data, since each agent should, at least,  send its own estimated coverage map and receive those of neighbors. The second one requires communication of quantities, such as positions, orientations, and \ac{ADF}-related parameters that are fast and easy to handle and to exchange.

\subsection{Comments on the Information Level Estimation}
\label{sec:comments}

With reference to Theorem~\ref{th:Estimation}, the expression of $r^*$ implies that, if a guarantee on the quality of the coverage estimation is to be sought, communication updates should happen more frequently if vehicles are allowed to move faster (and thus have larger $\ell^\text{max}_T$). Agents speed can be reduced by choosing higher values of the damping $\mu$ in~\eqref{eq:control_law}. 

Moreover, in a team with a high number of agents, the region around each agent where, at time $t_k$, its estimation is correct is sensibly reduced. The reason is twofold: first, the cardinality $N$ of the team directly affects the radius $r^*$; furthermore, the estimator definition in~\eqref{eq:information_time_history}, which considers only the agent contribution between the update instants,  does not take into account the contributions of the other agents, thus having higher chances of errors along with a higher number of team members.

In fact, the level of information provided by~\eqref{eq:information_time_history} is an underestimation of the real attained level:
\begin{equation}
	\label{eq:underestimation}
	\hat{I}_i(t,\q) \le I(t,\q),
	\quad
	t \geq t_0,
	\quad
	\forall i\in \mathcal{T}
	,\,\,
	\forall \q \in Q
\end{equation}
for the aforementioned reasons.

Having shown that the centralized control law guarantees the task fulfillment, it should be pointed out that the decentralized scheme only induces a slight performance degradation, without compromising the mission. Indeed, since \eqref{eq:underestimation} holds, agent $i$ could be attracted by regions with an already high level of information $I$, with a remarkable disparity $\hat{I}_i \le I$. This would be a reasonable trade-off due to the employment of a distributed estimation. Conversely, agent $i$ would never be repulsed from regions with a very low level of true attained coverage, since $\hat{I}_i$ would be low in those regions as well. In addition, because of its definition, $\hat{\xi}_i(t,\p) \ge \xi(t,\p)$ for all $i \in \mathcal{T}$, and for $t \ge t_0$, thus remarking that with only locally achievable information agent $i$ overestimates the error function, assuming that the task is farther from being fulfilled than it truly is.

\section{Simulation Results}
\label{sec:SimulationResults}

%\begin{figure}
%	\centering
%	\subfloat[Gaussian \ac{ADF}.]{
%		\label{fig:GaussianADF}
%		\includegraphics[width=.45\columnwidth]{images/ADF_Gaussian.pdf}
%	}
%	\hfil
%	\subfloat[Gaussian \ac{ADF} with limited field of view.]{
%		\label{fig:GaussianADF_FOV}
%		\includegraphics[width=.45\columnwidth]{images/ADF_Gaussian_FoV.pdf}}
%	\caption{Agents \ac{ADF}s.}	
%\end{figure}

%TODO: riferimento a Siciliano per espressioni dinamica agenti corretto? Mettere anche numerodi equazione all'interno del libro? (Giovanni)

To validate the proposed distributed control law a persistent coverage scenario in a $100 \times 100$ square environment involving $6$ agents was simulated. The information decay rate was set to $\delta = -0.1$ and $C^*= \sigma(\q) = 1$ for all $\q \in Q$. Seven obstacles were placed in the environment: four of them are virtual walls surrounding the operational area, preventing agents from exiting it. The remaining ones are polygons placed within $Q$. Three of the agents have double integrator dynamics while the other three are dynamic unicycles with unitary mass and moment of inertia (for the equations see~\cite[Chap. 11.4]{Siciliano2009}). Having set $r = 0.5$ and $R = 3$, the initial poses were chosen such that $\p(t_0)\in\mathcal{P}(r)$. The double integrators carry isotropic \emph{Gaussian \ac{ADF}s}:
\begin{equation}
\label{eq:ADF_Gaussian}
	d_{G}(\p_i,\!\q) 
	\!
	=
	\!
	A \exp \left( 
		-\frac{1}{2}
		\left\|
			\!
			\begin{bmatrix}
				\cos\theta & \sin\theta
				\\
				-\sin\theta & \cos\theta
			\end{bmatrix}
			\!
			(\q - \mathbf{S}\p_i)
		\right\|^2_{\boldsymbol{\Sigma}^{-1}} \right)
\end{equation}
with $A = 3$ and $\Sigma = \diag{3, 3}$. The unicycles are characterized by a \emph{Gaussian \ac{ADF} with limited field of view}: 
\begin{equation}
	d_{G,\text{FOV}}(\p_i,\q) = d_{G}(\p_i,\q)f_\text{FOV}(\p_i,\q)
\end{equation}
with
\begin{align}
	&f_\text{FOV}(\p_i,\q) 
	= 
	f_{\text{FOV},r}(\p_i,\q) f_{\text{FOV},l}(\p_i,\q)
	\\
	&f_{\text{FOV},r}(\p_i,\q) 
	= 
	\left(1 + e^{-k\left(r_1\cos\left(\phi/2\right) + r_2\sin\left(\phi/2\right)\right)}\right)^{-1}
	\\
	&f_{\text{FOV},l}(\p_i,\q) 
	= 
	\left(1 + e^{-k\left(-r_1\cos\left(\phi/2\right) + r_2\sin\left(\phi/2\right)\right)}\right)^{-1}
	\\
	&r_1
	= 
	(q_x - x_i)\cos\left(\theta_i - \pi/2\right) + (q_y - y_i)\sin\left(\theta_i - \pi/2\right)
	\\
	&r_2
	= 
	-(q_x - x_i)\sin\left(\theta_i - \pi/2\right) + (q_y - y_i)\cos\left(\theta_i - \pi/2\right)\!\!
\end{align}
The function $f_\text{FOV}: \mathcal{C}(Q)\times \RealSet^2 \rightarrow [0,1]$ shapes the field of view and is equal to $1$ inside it, while smoothly decreasing to $0$ as its boundaries are approached. The parameter $k$ models the slope and was set to $2$, while $\phi$ is the field of view angle and was set to $\pi/2$. As regards the control gains, the following values were assigned: $\alpha = \beta = 30$, $\mu = 20$. The communication range is $R_\text{com} = 150$, thus ensuring the connectivity of the graph $\mathcal{G}(t)$. The estimate update period $T$ has been set to $1$.

\begin{figure}
	\centering
	\includegraphics[width=.9\columnwidth]{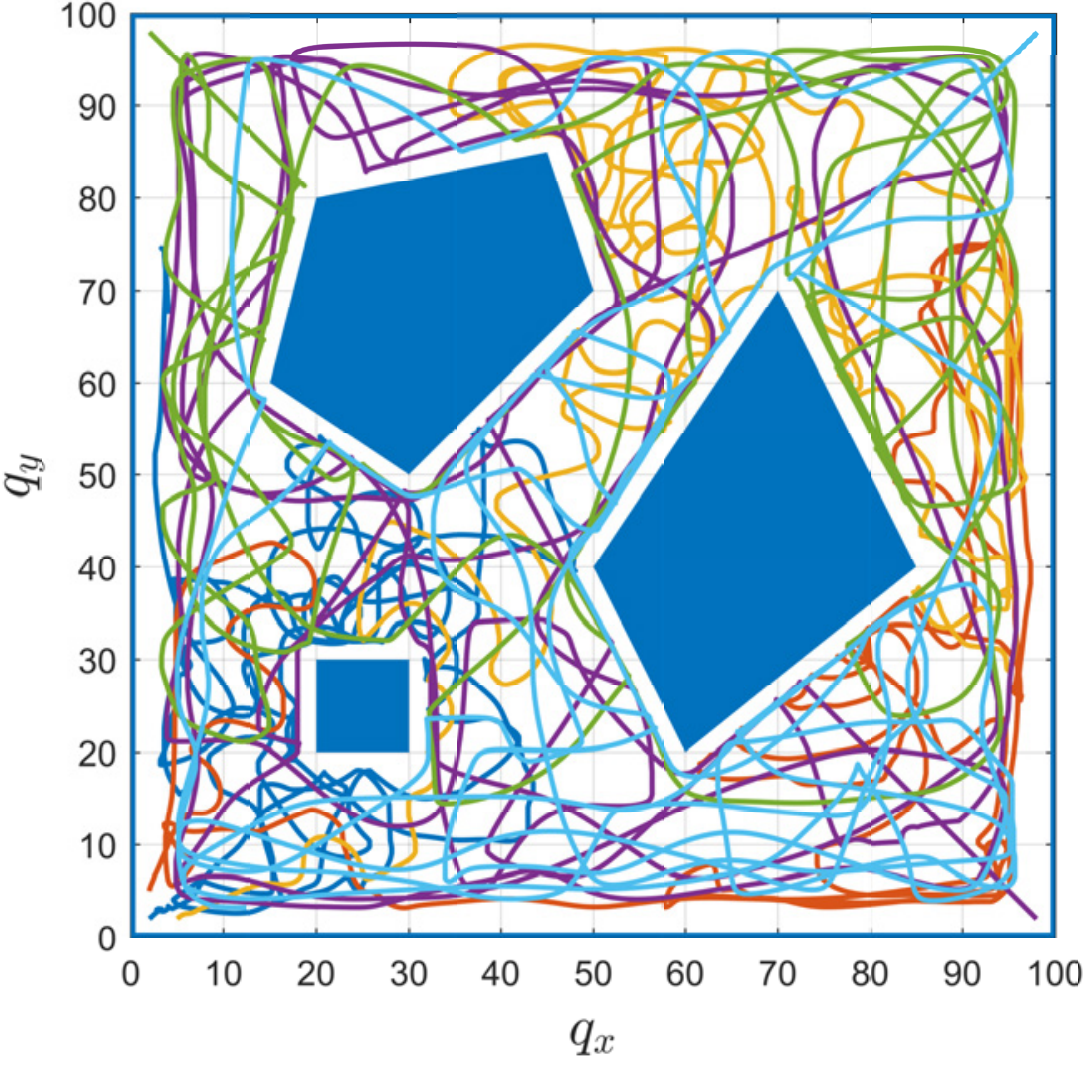}
	\caption{Agents trajectories.}
	\label{fig:AgentsTrajectories}
\end{figure}

\begin{figure}
	\centering
	\subfloat[Normalized true $\xi(t)$.]{
		\label{fig:XiEvolution}
		\includegraphics[width=\columnwidth]{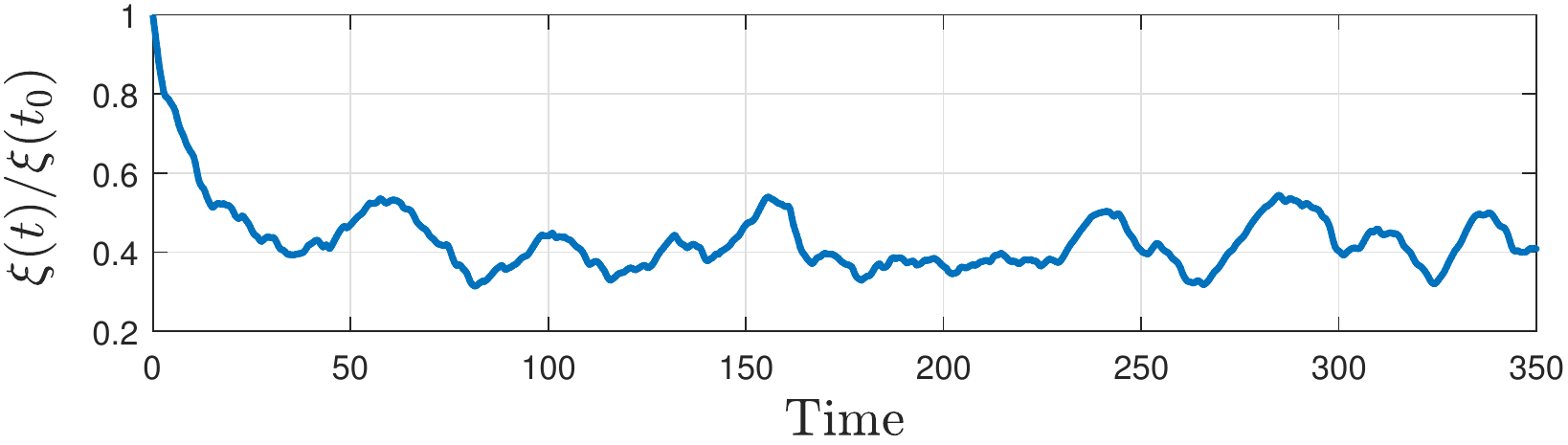}
	}
	\\
	\subfloat[{Normalized agents estimation error, for $t \in [100, 110]$.}]{
		\label{fig:XiEstimationError}
		\includegraphics[width=\columnwidth]{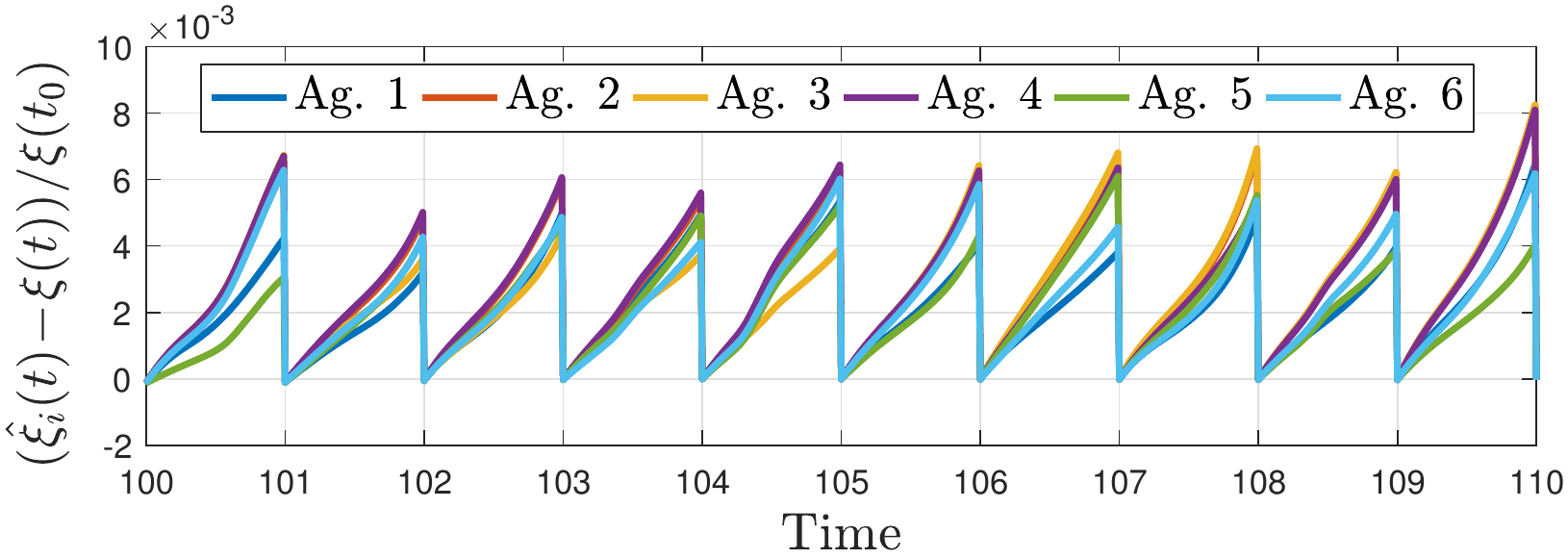}
	}
	\caption{Evolution of $\xi(t)$ and agents estimation error.}
\end{figure}

\begin{figure}
	\centering
	\subfloat[Inter-agent distances.]{
		\label{fig:InterAgentDistances}
		\includegraphics[width=\columnwidth]{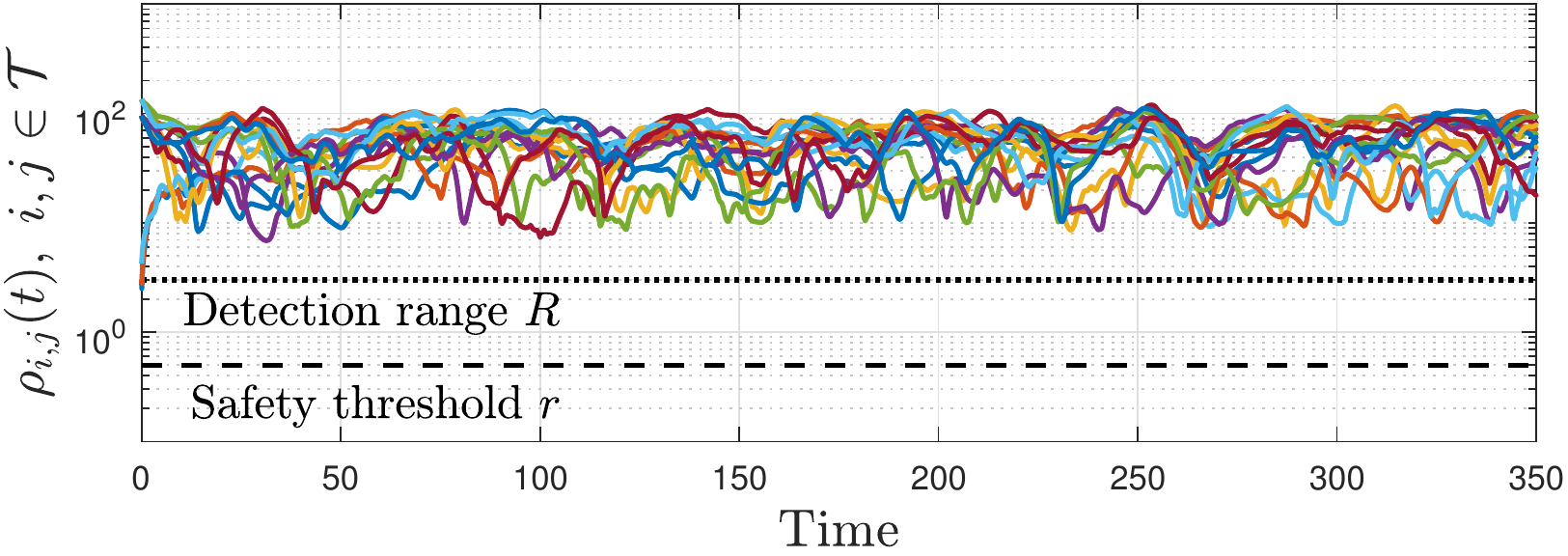}
	}
	\\
	\subfloat[Agent-obstacle distances.]{
		\label{fig:AgentObstacleDistances}
		\includegraphics[width=\columnwidth]{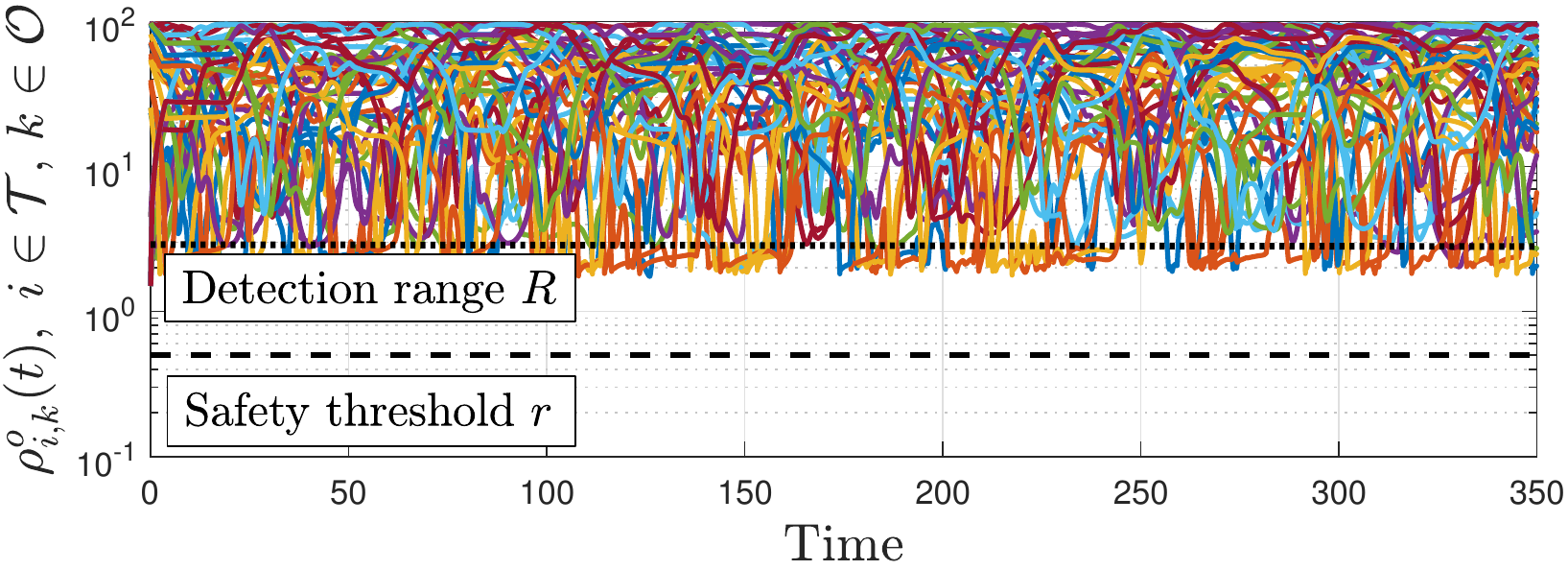}
	}
	\caption{Inter-agent and agent-obstacle distances evolution.}
\end{figure}

Agents trajectories are shown in Figure~\ref{fig:AgentsTrajectories}. It can be seen that the persistent task is efficiently performed and all the area is continuously covered by the agents. The efficiency is confirmed by the evolution of the error index $\xi(\cdot)$ shown in Figure~\ref{fig:XiEvolution}. The agents estimation error is shown in Figure~\ref{fig:XiEstimationError}. Note that $\hat{\xi}(t) \geq \xi(t)$ during the analyzed period, that is agents overestimate the error index confirming what pointed out in Section~\ref{sec:comments}. The error grows between the updated instants, and becomes approximately zero when updates are exchanged, proving the effectiveness of the distributed estimation algorithm. Figures~\ref{fig:InterAgentDistances} and~\ref{fig:AgentObstacleDistances} confirm that the task is safely executed, and no collisions occur.

\section{Conclusions}

We presented a distributed cooperative control strategy for persistent coverage tasks execution. The proposed control law is designed for coping with agents having different types of dynamics and heterogeneous sensing capabilities. The latter was modeled and handled by means of the tools provided by the descriptor function framework. Task execution safety was guaranteed by means of obstacles and inter-agent collision avoidance functions. An algorithm for the estimation of the attained level of coverage was derived in order to decentralize the control law. The effectiveness of the control law and of the estimation algorithm was formally proved and showed by means of a numerical simulation.

%\section*{Acknowledgments}

%TODO : accenti nella bibliografia

\bibliographystyle{IEEEtran}
\bibliography{./ieee_bibliography/IEEEabrv,./biblio}

\end{document}